\newcommand{\Fq}{\mathbb{F}_q}
\newtheorem{theorem}{Theorem}[section]
\newtheorem*{remark}{Remark}
\theoremstyle{definition}
\newtheorem{definition}{Definition}[section]
\def\section{\@startsection{section}{1}%
  \z@{.7\linespacing\@plus\linespacing}{.5\linespacing}%
  {\normalfont\scshape\centering}}
\def\subsection{\@startsection{subsection}{2}%
  \z@{.5\linespacing\@plus.7\linespacing}{-.5em}%
  {\normalfont\bfseries}}
\title{A Generalization of the ASR Search Algorithm to 2-Generator Quasi-Twisted Codes}
\author{Dev Akre, Nuh Aydin, Matthew J. Harrington, Saurav R. Pandey}
\date{}
\begin{document}
\nolinenumbers
\maketitle
\begin{abstract}
One of the main goals of coding theory is to construct codes with best possible parameters and properties. A special class of codes called quasi-twisted (QT) codes is well-known to produce codes with good parameters. Most of the work on QT codes has been over the 1-generator case. In this work, we focus on 2-generator QT codes and generalize the ASR algorithm that has been very effective to produce new linear codes from 1-generator QT codes. Moreover, we also generalize a recent algorithm to test equivalence of cyclic codes to constacyclic codes. This algorithm makes the ASR search even more effective. As a result of implementing our algorithm,  we have found 103 QT codes that are new among the class of QT codes. Additionally, most of these codes possess the following additional properties: a) they have the same parameters as best known linear codes, and b) many of the have additional desired properties such as being LCD and dual-containing. Further, we have also found a binary 2-generator QT code that is new (record breaking) among all binary linear codes  \cite{database} and its extension yields another record breaking binary linear code.  
\end{abstract}

\textbf{Keywords:} new quasi-twisted codes, best known codes, LCD codes, dual-containing codes, 2-generator quasi-twisted codes

\section{Introduction and Motivation}
Constructing codes with best possible parameters is one of the central and challenging problems in coding theory. Every linear code has three fundamental parameters: the length ($n$), the dimension ($k$), and the minimum (Hamming) distance $d$. Such a code over the finite field $\mathbb{F}_q$ is referred to as an $[n,k,d]_q$-code. 

Thus, we have an optimization problem where we want to determine the optimal value of the third parameter of a linear code given the values of the other two parameters. For  instance,  fixing $n$ and $k$, we look for the largest possible value of $d$, denoted $d_q[n,k]$. The online database (\cite{database}) gives known information about $d_q[n,k]$ for $q\leq 9$ and for code lengths up to 256 (different upper bound for $n$ for each field).  

However, this optimization problem is computationally taxing even with the help of modern computers. First, computing the minimum distance of an arbitrary linear code is NP-hard, and it becomes infeasible for large dimensions \cite{NPhard}. Second, for a given length, dimension, and finite field $\Fq$, the number of linear codes is large and grows very quickly. For these reasons, it becomes computationally infeasible to conduct comprehensive searches over arbitrary linear codes. As can be observed from the database \cite{database}, it most cases optimal codes are not known. In general, this optimization problem is solved when either $k$ or $n-k$ is relatively small.  

 A promising class of codes called quasi-twisted (QT) codes have been extremely fruitful in producing codes with good parameters. The search algorithm ASR, first introduced in \cite{ASR} for 1-generator QT codes, has been particularly effective. The ASR algorithm and its generalization \cite{GenASR} have been used in many subsequent works (e.g., \cite{qt1,qt2,qt3,qt4}) and produced many record breaking codes.  Most of the research in the literature on QT codes has been on the special case of 1-generator codes with only a few exceptions on multi-generator codes (e.g., \cite{2gen1,2gen2}). The main goal of this work is to generalize the ASR algorithm to 2-generator QT codes and test its effectiveness.

\section{ Basic Definitions and Preliminaries}
Cyclic codes have a prominent place in coding theory for both theoretical and practical reasons. Some of the most important classes of  codes such as binary Hamming Codes, BCH codes, Reed-Solomon codes, and quadratic residue codes are either cyclic or equivalent to cyclic codes. Theoretically, they establish a key link between coding theory and algebra via the correspondence that maps an arbitrary vector $(c_0,c_1,\dots,c_{n-1})$ in $\Fq^n$ to the polynomial $c(x)=c_0+c_1x+\cdots+c_{n-1}x^{n-1}$ of degree less than $n$. Based on this vector space isomorphism, we use vectors and polynomials interchangeably. 

\begin{definition}
A linear code $C$ is called cyclic if it is closed under the cyclic shift operator $\pi$, i.e., whenever $c=(c_0,c_1,...,c_{n-1})$ is a codeword of C, then so is 
$\pi(c) = (c_{n-1},c_0,.....,c_{n-2})$. 
\end{definition}

\noindent  Constacyclic  codes are generalizations of cyclic codes.

\begin{definition}
Let $a$ be a non-zero constant in $\mathbb{F}_q$. A linear code $C$ is called constacyclic (CC) if it is closed under the constacyclic shift operator, i.e., whenever $c=(c_0,c_1,...,c_{n-1})$ is a codeword of C, then so is 
$\pi_a(c) = (ac_{n-1},c_0,.....,c_{n-2})$.
\end{definition}
\noindent  It is well known that the CC shift corresponds to multiplication by $x \mod x^n-a$ and constacyclic codes are ideals in the quotient ring  $\mathbb{F}_q[x]/ \langle x^n-a \rangle$. Note that when the constant $a$ in the above definition, called the shift constant, is taken to be 1,  we obtain cyclic codes as a special case. Similarly to a cyclic code, a CC code $C$ of length $n$ over $\mathbb{F}_q$ with shift constant $a$ has a generator $g(x)$ that must divide $x^n-a$. Such a non-zero, monic polynomial of smallest degree  is unique and it is called the (standard) generator of $C$. A CC code $C$ is a principal ideal generated by $g(x)$, denoted $C=\langle g(x)\rangle$. Therefore, there is a one-to-one correspondence between CC codes of length $n$ with shift constant $a$ over $\mathbb{F}_q$ and the divisors of $x^n-a$. The dimension of $C=\langle g(x)\rangle$ is $k=n-\deg(g(x))$ with a basis $\{g(x),xg(x),\dots,x^{k-1}g(x)\}$. The polynomial $h(x)=(x^n-a)/g(x)$ is called the check polynomial for C. The check polynomial has the property that a word $v(x)$ is in $C$ if and only if $h(x)v(x)=0$ in $\mathbb{F}_q[x]/ \langle x^n-a \rangle$. A CC code $C=\langle g(x) \rangle$ has many generators and any generator is of the form $g(x)f(x)$ where $\gcd(f(x),h(x))=1$.  From the generator polynomial $g(x)=g_0+g_1x+\cdots+g_mx^m$ of a CC code $C$ we obtain its generator matrix as an $a$-circulant (twistulant) matrix
\begin{center}
\[ Circ(g)=
\left[
\begin{array}{ccccccc}
g_0&g_1&\cdots&g_{m}& 0 & \cdots &0\\
0&g_0&g_1&\cdots&g_{m}& 0 \cdots&0\\
\hdots \\
0& \hdots &0 &g_0& g_1& \cdots &g_{m} \\

\end{array}
\right]
\]
\end{center}
 
\noindent where each row is the cyclic (constacyclic) shift of the row above.

\subsection{ Quasi-Cyclic (QC) and Quasi-Twisted (QT) codes.}
 Quasi-cyclic (QC) and quasi-twisted (QT) codes are generalizations of cyclic and CC codes where the shift can occur by $\ell$ positions for any $\ell\geq 1$.  A linear code $C$ is said to be $\ell$-quasi-cyclic (QC) if for a positive integer $\ell$, whenever $c = (c_0,c_1,...,c_{n-1})$ is a codeword, $(c_{n-\ell},...,c_{n-1}, c_0, c_1,...c_{n-\ell-1})$ is also a codeword. Similarly, a linear code $C$ is said to be $\ell$-quasi-twisted (QT) if it has the property that for a positive integer $\ell$ that divides $n$, if $c = (c_0,c_1,...,c_{n-1})$ is a codeword, then so is $(ac_{n-\ell},...,ac_{n-1}, c_0, c_1,...c_{n-\ell-1})$. Such a code $C$ is called a QT code of index $\ell$, or an $\ell$-QT code. If $\gcd(\ell,n)=1$, then a QT code is CC so WLOG, we assume that $\ell$ is a divisor of $n$. Hence we assume $n=m\cdot \ell$. Algebraically a QT code of length $n=m\cdot \ell$ is an $R$-submodule of $R^{\ell}$ where $R=\mathbb{F}_q[x]/ \langle x^m-a \rangle$. A generator matrix of a QT code can be put into the form
\begin{center}
\[\begin{bmatrix}
    G_{1,1} & G_{1,2} & \dots & G_{1,\ell} \\
    G_{2,1} & G_{2,2} & \dots & G_{2,\ell} \\
    \vdots  & \vdots  & \ddots&\vdots \\
    G_{r,1} & G_{r,2} & \dots & G_{r,\ell}
\end{bmatrix}
\]
\end{center}
where each $G_{i,j}=Circ(g_{i,j})$ is a twistulant matrix defined by some polynomial $g_{i,j}(x)$.  Such a code is called an $r$-generator QT code. Most of the work on QT codes in the literature is focused on the 1-generator case. In this work, we consider 2-generator QT codes.

\section{2-Gen QT Codes}
 A generator of a 2-generator QT code of index $\ell$ has the following form
\[
\mathbf{g(x)}=
  \begin{bmatrix}
    g_{11}(x), g_{12}(x),  \cdots, g_{1\ell}(x)\\
    g_{21}(x), g_{22}(x), \cdots, g_{2\ell}(x)\\

  \end{bmatrix}
\]

with the corresponding generator matrix 

\[
G=
  \begin{bmatrix}
     Circ(g_{11}(x)) &  Circ(g_{12}(x)) & \cdots &  Circ(g_{1\ell}(x)) \\
     Circ(g_{21}(x)) &  Circ(g_{22}(x))& \cdots  &  Circ(g_{2\ell}(x))\\

  \end{bmatrix}.
\]

Our goal is to generalize the ASR algorithm that has been very effective to produce new linear codes, to the 2-generator case. The ASR algorithm was based on the following theorem. 

\begin{theorem}  \cite{ASR}
Let $C$ be a $1$-generator $\ell$-QT code over $F_q$ of length $n = m\ell$ and shift constant $a$ with the generator of the form:
$$
( f_1(x)g(x), f_2(x)g(x), . . . , f_{\ell}(x)g(x)),
$$
where $x^m -a  = g(x)h(x)$ and for all $i=1 ,..., \ell$, $gcd(h(x),f_i(x))=1$ . Then, $C$ is an $[n,k,d']_q$-code where $dim(C)=m-deg(g(x)),$ and $d' \geq \ell \cdot d$, where d is the minimum distance of the constacyclic code $C_g$ of length $m$ generated by $g(x)$.
\end{theorem}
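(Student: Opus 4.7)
The plan is to work with the $\mathbb{F}_q$-linear map $\phi : R \to \mathbb{F}_q^n$ with $R = \mathbb{F}_q[x]/\langle x^m-a\rangle$ defined by
\[
\phi(u(x)) = \bigl(u(x)f_1(x)g(x),\, u(x)f_2(x)g(x),\, \ldots,\, u(x)f_\ell(x)g(x)\bigr),
\]
whose image is, by definition, the code $C$. Everything will come from computing $\ker\phi$ using the coprimality hypothesis $\gcd(h,f_i)=1$.

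First, for the dimension I would identify $\ker\phi$. An element $u(x)\in R$ lies in the kernel exactly when $u(x)f_i(x)g(x)\equiv 0 \pmod{x^m-a}$ for every $i$, i.e.\ when $h(x)\mid u(x)f_i(x)$ for every $i$. Because $\gcd(h,f_i)=1$, this forces $h(x)\mid u(x)$. Conversely any multiple of $h(x)$ clearly lies in the kernel. Hence $\ker\phi = \langle h(x)\rangle \subseteq R$, which is the constacyclic code of dimension $m-\deg(h)=\deg(g)$. The rank–nullity theorem then gives $\dim C = m - \deg(g)$, as claimed.

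For the distance bound, pick any nonzero codeword $c = \phi(u)\in C$ and write $c=(c_1,\dots,c_\ell)$ with $c_i = u(x)f_i(x)g(x) \bmod (x^m-a)$. The crucial observation is that no single component $c_i$ can vanish unless $c=0$: indeed, $c_j=0$ means $h(x)\mid u(x)f_j(x)$, and the coprimality $\gcd(h,f_j)=1$ again forces $h(x)\mid u(x)$, which by the previous paragraph makes $u\in\ker\phi$ and hence $c=0$, contradicting our assumption. Therefore every component $c_i$ is nonzero.

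Now each $c_i$ lies in the constacyclic code $C_g=\langle g(x)\rangle$ of length $m$ (it is a multiple of $g(x)$ in $R$), so as soon as $c_i\neq 0$ we have $\mathrm{wt}(c_i)\geq d$. Summing over the $\ell$ blocks gives
\[
\mathrm{wt}(c) = \sum_{i=1}^{\ell} \mathrm{wt}(c_i) \;\geq\; \ell\cdot d,
\]
which yields $d' \geq \ell d$ as asserted. I do not foresee a serious obstacle: the whole argument rests on the single observation that the coprimality of $h$ with each $f_i$ forces all block-components of a codeword to be simultaneously zero or simultaneously nonzero; the dimension count and the weight bound then follow essentially for free.
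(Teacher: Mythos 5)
Your proof is correct and complete. Note that the paper does not actually prove this statement: it is quoted from the reference \cite{ASR} and used as the starting point for the $2$-generator generalization, so there is no in-paper proof to compare against. Your argument is the standard one and all the steps check out: the identification $\ker\phi=\langle h(x)\rangle$ follows from $\gcd(h,f_i)=1$ in the PID $\mathbb{F}_q[x]$ (if $h\mid uf_i$ and $\gcd(h,f_i)=1$ then $h\mid u$), rank--nullity gives $\dim C=m-\deg g$, and the same coprimality observation shows that the $\ell$ block components of a codeword vanish simultaneously or not at all, so each nonzero codeword has all blocks of weight at least $d$. It is worth remarking that your route is cleaner than the one the paper uses for its own $2$-generator analogue (Theorem 3.2), where the dimension is established by verifying linear independence of the rows of the block-twistulant generator matrix rather than by computing the kernel of the evaluation map; the rank--nullity formulation also makes transparent exactly where the hypothesis $\gcd(h,f_i)=1$ is used, namely once for the dimension and once for the ``no block vanishes'' step in the distance bound.
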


In the original implementation of the ASR algorithm, the first step was to compute all CC codes of length $m$. When there are multiple codes of a given dimension $k$, a code with highest minimum distance was chosen, say with generator $g(x)$. Then a search for  QT codes is set up so that many generators of the form given in the Theorem above are considered for the same $g(x)$ but with different $f_i(x)$'s that satisfy the gcd condition. Later, the ASR algorithm was generalized in \cite{GenASR}.  In the more generalized version, all constacyclic codes for a given length and dimension are partitioned into equivalence classes and one generator polynomial from each equivalence class is used to set up a QT search. The general algorithm uses more generator polynomials $g(x)$, hence it is more comprehensive. In fact, it is shown in \cite{GenASR} that record breaking codes were obtained with the more general algorithm, that would have been missed by the original algorithm because they came from generators of CC codes that did not have highest minimum distances. 

Notion of equivalent codes was fundamental to the generalization in \cite{GenASR}.  Codes that are equivalent share the same parameters and weight distributions, so it is sufficient to pick only one code from each equivalence class. Two linear codes are called equivalent to each other if one is obtained from the other by using any combination  of the following operations:

\begin{enumerate}
    \item Permutation of the coordinates.
    \item Multiplication of elements in a fixed position by a non-zero scalar in $\mathbb{F}_q$.
    \item Applying an automorphism of $\mathbb{F}_q$ to each component of the vectors.
\end{enumerate}

When only the first transformation (permutation) is used, the resulting codes are said to be permutation equivalent, which is an important special case. In a recent work, an efficient algorithm to check equivalence of cyclic codes was given \cite{cycliceq}. More recently. it has been generalized to CC codes \cite{CCeq}. We made use of this algorithm in our search.


We begin the ASR search algorithm for 2-generator QT codes by taking a generator $g(x)$ of a CC code over $\mathbb{F}_q$ with length $m$. Then, we construct the generator of an $\ell$-QT code as in the 1-generator case of ASR.
$$( f_1(x)g(x), f_2(x)g(x), . . . , f_l(x)g(x)),$$
\noindent where all $f_i(x)$ are chosen arbitrarily from $\mathbb{F}_q[x]/ \langle x^m-1\rangle$ such that they are relatively prime to $h(x)$, the check polynomial of the CC code generated by $g(x)$, and $\deg(f_i(x))<\deg(h(x))$. The second block has a similar structure but we often take one component 0. The following theorem gives more specifics about the form of a generator for a 2-QT code we considered.

\begin{theorem}
Let $g$ be the standard generator of a CC code $C_g$ of length $m$ over $\mathbb{F}_q$, i.e., $x^m-a=gh_1$. Let $p$ be a polynomial over $\mathbb{F}_q$ such that $p$ divides $h_1$. Let $h_2 = \frac{(x^m - a)}{p \cdot g}$. Let $C$ be a 2-gen QT code with a generator  of the form 
$$ \mathbf{g(x)}=\begin{bmatrix}
    gf_{11} & gf_{12} &  \cdots & gf_{1\ell} \\
    0 & pgf_{22} &  \cdots & pgf_{2\ell}
  \end{bmatrix} $$

\noindent where $\gcd(f_{1j},h_1)=1$ and $\gcd(f_{2j},h_2)=1$. Let $k_1 = m - deg(g)$ and let $k_2 = m - deg(p \cdot g)$. Then, the dimension of the QT code generated by $\mathbf{g(x)}$ is $k = k_1 + k_2$ and $d(C) \geq d$ where $d$ is the minimum distance of the CC code generated by $g$. 
\end{theorem}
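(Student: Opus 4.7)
\medskip

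\noindent \textbf{Proof plan.} The strategy is to view $C$ as the sum of two $R$-submodules $C_1,C_2\subseteq R^\ell$ (with $R=\mathbb{F}_q[x]/\langle x^m-a\rangle$), where $C_1$ is generated by the first row and $C_2$ by the second row. I will compute $\dim C_1$ and $\dim C_2$ separately via annihilator ideals, show the intersection is trivial, and then handle the distance by a case analysis on the first coordinate.

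\medskip

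\noindent \textbf{Dimension.} For $C_1=\langle(gf_{11},\dots,gf_{1\ell})\rangle_R$, I would compute its $R$-annihilator: $u$ annihilates the generator iff $h_1\mid uf_{1j}$ for every $j$, and the hypothesis $\gcd(f_{1j},h_1)=1$ (even for a single $j$, say $j=1$) collapses this to $h_1\mid u$. Thus $C_1\cong R/\langle h_1\rangle$ has $\mathbb{F}_q$-dimension $\deg h_1 = m-\deg g = k_1$. The same argument applied to $C_2=\langle(0,pgf_{22},\dots,pgf_{2\ell})\rangle_R$ uses that $x^m-a=(pg)\cdot h_2$, so $v\cdot pgf_{2j}=0$ in $R$ iff $h_2\mid vf_{2j}$, which with $\gcd(f_{2j},h_2)=1$ forces $h_2\mid v$; hence $\dim C_2=\deg h_2 = m-\deg(pg)=k_2$. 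To conclude $\dim C=k_1+k_2$ I would verify $C_1\cap C_2=\{0\}$ by reading the first coordinate: every element of $C_2$ vanishes there, so an element $u\cdot(gf_{11},\dots,gf_{1\ell})\in C_1\cap C_2$ satisfies $ugf_{11}=0$ in $R$; the coprimality $\gcd(f_{11},h_1)=1$ then forces $ug=0$, which makes the entire vector zero.

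\medskip

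\noindent \textbf{Minimum distance.} Any codeword $c\in C$ has the form
\[
c = \bigl(ugf_{11},\; ugf_{12}+vpgf_{22},\; \dots,\; ugf_{1\ell}+vpgf_{2\ell}\bigr)
\]
for some $u,v\in R$. I split into two cases. If $ug\neq 0$ in $R$, then the first coordinate $ugf_{11}$ is a nonzero multiple of $g$ (nonzero because $\gcd(f_{11},h_1)=1$ prevents cancellation in $R/\langle h_1\rangle$), hence is a nonzero codeword of $C_g$, so its Hamming weight is at least $d$, and the total weight of $c$ is at least $d$. If $ug=0$ in $R$, the first row contributes nothing and $c=(0,vpgf_{22},\dots,vpgf_{2\ell})$; nontriviality of $c$ forces $vpg\neq 0$, and for each $j\geq 2$ the coordinate $vpgf_{2j}$ is nonzero by $\gcd(f_{2j},h_2)=1$ and is a multiple of $g$, so it lies in $C_g\setminus\{0\}$ and again has weight at least $d$.

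\medskip

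\noindent \textbf{Where the work lies.} The dimension computations and the case split are short once the right module-theoretic viewpoint is fixed; the only real care is in making sure that the coprimality hypotheses are used in the correct direction in each step (for $C_1$ against $h_1$, and for $C_2$ against $h_2=h_1/p$), and in observing the key containment $\langle pg\rangle\subseteq\langle g\rangle=C_g$ which is what lets us inherit the bound $d$ in the second case rather than needing a separate bound for the subcode generated by $pg$.
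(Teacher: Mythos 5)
Your proof is correct and follows essentially the same route as the paper's: both arguments exploit the zero first block of the second generator to decouple the two rows, use the conditions $\gcd(f_{1j},h_1)=1$ and $\gcd(f_{2j},h_2)=1$ to get dimensions $k_1$ and $k_2$ for the two pieces, and bound the distance by observing that every coordinate block of a codeword is a multiple of $g$ and hence lies in $C_g$. The only difference is presentational: you phrase the dimension count via $R$-module annihilators and the decomposition $C=C_1+C_2$ with $C_1\cap C_2=\{0\}$, whereas the paper argues directly that the $k_1+k_2$ rows of the twistulant generator matrix are linearly independent over $\mathbb{F}_q$ --- the same computation in different clothing.
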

\begin{proof}
Let \[
G=
  \begin{bmatrix}
     Circ(gf_{11}(x)) &  Circ(gf_{12}(x)) & \cdots &  Circ(gf_{1\ell}(x)) \\
     0 &  Circ(pgf_{22}(x))& \cdots  &  Circ(pgf_{2\ell}(x))\\

  \end{bmatrix}.
\] 
be a generator matrix of $C$ and consider $Circ(gf_{11}(x))$ on the first row of $G$ and the $k \times m$ zero matrix on the second row of $G$. Any codeword $c \in C$ is a linear combination of  the rows $\{r_1,r_2,\dots,r_{k_1+k_2}\}$ of $G$. Thus, $c = a_1 \cdot r_1 + a_2 \cdot r_2 + \ldots +  a_{k_1} \cdot r_{k_1} + b_1 \cdot r_{k_1+1} + \ldots + b_{k_2} \cdot r_{k_1 + k_2}$ for some $a_i\in \mathbb{F}_q$. Consider the first $m$ coordinates of $c$. As the first $m$ coordinates of $r_i$, $i=k_1+1,k_1+2,\ldots,k_1 + k_2$, are zeroes, they are determined by rows $r_j$, $j=1,2,\ldots,k_1$. When considering the first $m$ coordinates of a codeword, we only need to consider $gf_{11}$ and $Circ(gf_{11}(x))$.

Let $gf_{11} = p_0 + p_1x + \cdots + p_{m-1}x^{m-1}$ and let its twistulant matrix be \[
Circ(gf_{11}(x))=
  \begin{bmatrix}
    p_0 & p_{1} & p_{2} & \cdots & p_{m-1} \\
    ap_{m-1} & p_{0} & p_{1} & \cdots & p_{m-2} \\
    ap_{m-2} & ap_{m-1} & p_{0} & \cdots & p_{m-3} \\
    \vdots & \vdots &  \vdots &  &  \vdots  \\
    ap_{m-k+1} & ap_{m-k+2} & ap_{m-k+2} & \cdots & p_{m-k}
  \end{bmatrix}.
\]
Note that the $k_1$ rows of $Circ(gf_{11})$ are linearly independent. Now let $c= \vec{0} \in C$. In particular, the first $m$ coordinates of $c$ is the zero vector. Since the rows of $Circ(gf_{11})$ are linearly independent,  it follows that for all $j=1,2,\ldots,k_1$, $a_j = 0$. Thus, $c = \vec{0} =  b_1 \cdot r_{k_1+1} + \ldots + b_{k_2} \cdot r_{k_1 + k_2}$. Now, since the rows of the matrices  $Circ(pgf_{22}(x)), \cdots,  Circ(pgf_{2\ell}(x))$ are linearly independent, $b_i=0$  for all $i=1,2,\ldots,k_2$.  Hence the rows of $G$ are linearly independent, and $dim(C) = k = k_1 + k_2$.


\smallskip 
To show that $d' \geq d$, consider an arbitrary codeword $c \in C$ such that $c= r(gf_{11} , gf_{12} ,  \ldots , gf_{1l}) + t ( pgf_{22} ,  \ldots , pgf_{2l}) = (rgf_{11},g(rf_{12}+tf_{22}),\ldots, g(rf_{1l}+tf_{2l})$ for some polynomials $r, t \in \mathbb{F}[x]/\langle x^n-a\rangle$. 
Since $c$ is non-zero, at least one of the blocks of it must be non-zero. Since each block is a constacyclic code of minimum weight $d$, minimum weight of $c$, $d' \geq d$. 
\end{proof}
We have some important remarks about this theorem.

\begin{remark}

\begin{enumerate}
\item The lower bound on the minimum distance cannot be improved as there are cases where the lower bound is attained with equality. Consider, for example,  a 2-generator QT code $C$   as in Theorem 3.2 such that $p=1$, and $f_{1i}=f_{2i}$ for $i = 2,3,\dots, \ell$. Let $r=-t$, and consider the codeword 
$c= r(gf_{11} , gf_{12} ,  \ldots , gf_{1l}) + t ( 0,pgf_{22} ,  \ldots , pgf_{2l})$. Then $c = ( rf_{11}g,0,0,\dots,0)$. It follows that weight of $c$, $d$ is the weight of the constacyclic code generated by $rf_{11}g$, which is $d$.


\item Even though the lower bound in the theorem cannot be improved, and it is much smaller than the 1-generator case, the actual minimum distance of $C$ is often much larger.

\item In the special case when $p=1$, we have  $k_1 = m - deg(g)$ and $k_2 = m - deg(1 \cdot g) = m - deg(g)$. Thus, $k_1 = k_2$, which means $dim(C) =$ $ k_1 + k_2 $ $= k_1 + k_1 = 2k_1$.

\end{enumerate}
\end{remark}


\section{Search Methods}
We developed and implemented a search algorithm based on Theorem 3.2 as a well as a variation of it. We consider this method a generalization the ASR algorithm. Our search yielded many new QT codes some of them having additional desirable properties such as being  linear-complimentary dual(LCD), dual-containing, or reversible. Our search is set up the same way as the generalized ASR algorithm. Hence, we start with computing all CC codes of length $m$, partitioning them into equivalence classes using the algorithm in \cite{CCeq}, and picking one generator from  each equivalence class. We describe some special cases of the search below.

\newpage 


\addtolength{\hoffset}{2 cm}
\subsection{Case when $p=1$}
Let $g$ be the standard generator of a CC code $C_g$ of length $m$ and dim $k$, i.e. $x^m-a=gh$ with $k=\deg(h)=m-\deg(g)$. Let $C$ be a 2-generator QT code with a generator  of the form 
$$ \begin{bmatrix}
    gf_{11} & gf_{12} &  \cdots & gf_{1\ell} \\
    0 & gf_{22} &  \cdots & gf_{2\ell}
  \end{bmatrix} $$
where $deg(g) \geq 0$ and $\gcd(f_{ij},h)=1$.




\subsection{ $g_1=1$ and high degree $g_2$}

Let $g_1=1$ and let $g_2$ be the standard generator of a CC code $C_g$ of length $m$ and dim $k$, i.e. $x^m-a=g_2h$ with $k=\deg(h)=m-\deg(g_2)$. Let $C$ be a 2-generator QT code with a generator matrix $G$ of the form 
$$ \begin{bmatrix}
    f_{11} & f_{12} &  \cdots & f_{1\ell} \\
    0 & g_2f_{22} &  \cdots & g_2f_{2\ell}
  \end{bmatrix} $$
where $\deg(g_2)$ starts from the highest possible value and decreases subsequently and $\gcd(f_{2j},h)=1$.

\smallskip

\noindent The following method is not directly related to Theorem 3.2 but we were  able to obtain some good 2-generator QT codes from it.  

\subsection{The second block is a CC shift of the generators used in the first block}
Let $g$ be the standard gen of a CC code $C_g$ of length $m$ and dim $k$, i.e. $x^m-a=gh$ with $k=\deg(h)=m-\deg(g)$. Let $C$ be a 2-gen QT code with a gen $G$ of the form 
$$ \begin{bmatrix}
    gf_{11} & gf_{12} &  \cdots & gf_{1(\ell-1)} & gf_{1\ell} \\
    xgf_{1\ell} & gf_{11} & gf_{12} & \cdots & gf_{1(\ell-1)}
  \end{bmatrix} $$
where $deg(g) \leq 2$ and $\gcd(f_{ij},h)=1$.

\section{The New Codes}

Implementing the above search methods, we have obtained many new QT codes over GF(2), GF(3), GF(4), and GF(5) that have the following characteristics:

\begin{enumerate}
 
     \item All of these codes are new among the class of QT codes according to the database \cite{qcdatabase}.
     
    \item All of these codes have the the same parameters as BKLCs in \cite{database}.
   
    \item In many cases, the BKLCs in the database (\cite{database}) have indirect, multi-step constructions so it is more efficient and desirable to obtain them in the form of QT codes instead. Many of our codes possess this characteristic.
    
    \item A number of our codes have additional desirable properties such as  being dual-containing and linear complementary dual (LCD).
\end{enumerate}

 The following tables list the parameters, properties, and generators of these new QT codes. For clarification, $g_1$ represents the generator polynomial for the first block and $g_2$ represents the generator polynomial for the second block of the 2-generator QT code. All polynomials are listed by their coefficients  for a compact representation. For example, consider the $[39,24,6]_2$ code in Table 1 below whose $f_{11}$ is $x^{11} + x^{10} + x^8 + x^7 + x^6 + x^4 + x$. Its coefficients are $010010111011$ in increasing powers of $x$ from left to right. The block length of a CC code used to generate the QT code,  can be inferred from the data. For example, consider the $[39,24,6]_2$ code below. Observe that we have $[f_{11}]$,$[f_{12}]$, and $[f_{13}]$ in the last column for $[[f_{11}], \dots, [f_{1l}]]$, which means the index ($\ell$) is 3. This implies that the length of the CC code used is $m={n\over \ell}=\frac{39}{3}=13$. Also, in the tables below, we  indicate the shift constant $a$ of the field only if it is not equal to 1. 

\addtolength{\hoffset}{-2.2 cm}


\makeatletter
\def\old@comma{,}
\catcode`\,=13
\def,{%
  \ifmmode%
    \old@comma\discretionary{}{}{}%
  \else%
    \old@comma%
  \fi%
}
\makeatother


 
 \begin{table}[htb]
 {
    \centering
    \caption{New QT Codes with $g=g_1=g_2$ that are LCD}

    \begin{tabular}{l|l|p{0.70\linewidth}}
    
$[n,k,d]_q$ & $a,g$ & $[[f_{11}],...,[f_{1\ell}]] , [[f_{21}],...,[f_{2\ell}]]$\\
\hline&&\\
$[39,24,6]_2 $ & $[11]$ & $[[010010111011],[011000110001],[111011011011]
] , [ [0],[010101110011],[0001011001]]$ \\ 
$[51,32,8]_2 $ & $[11]$ & $[[10000101111101],[0111010011000001],[0111000
011001101]],[[0],[1101101011010011],[1110001100110101]]$ \\
$[87,56,10]_2 $ & $[11]$ & $[[0011000001100000010110111001],[11110111110
10011011000101101],[000111000101100111111011111]],[[0],[1110001001110101000101
0011],[000011010100100110001011001]]$ \\
$[105,40,22]_2 $ & $[11]$ & $[[01010101001100101011],[010011010011101001
01],[11110100001001011],[0000110010010101111],[11100010000000100101]],[[0],[00
100101110101111001],[1101100001100001],[00110001110101001011],[1000010001101110
1]]$ \\ 
$[57,36,8]_2 $ & $[11]$ & $[[000101011100111001],[111010000110111011],[1
1000010101110101]],[[0],[010111011010011001],[101000100101111111]]$ \\
$[95,36,20]_2 $ & $[11]$ & $[[01110001000001111],[010100101011100001],[0
10111110011000111],[000110110001101011],[000000000000100001]],[[0],[1010101011
010111],[00001010110111101],[101000000111000101],[10100111110011]]$ \\
$[93,60,10]_2 $ & $[11]$ & $[[000010101010010010011110011001],[000001100
100000000100010111001],[110001110110011111100101]],[[0],[000101000101010010010
10001111],[11111111100010100101001011]]$ \\ 
$[63,40,8]_2 $ & $[11]$ & $[[1001010101011001001],[00001011110001010101]
,[01111001100000111101]],[[0],[01111101101010011001],[00100110100011110001]]$ \\
$[36,24,6]_3 $ & $2,[1]$ & $[[001021111101],[200120211122],[02102100121]],[
[0],[012010101221],[110200002221]]$ \\
$[54,36,8]_3 $ & $2,[1]$ & $[[112122022022122001],[022211020011212211],[212
202102022020122]],[[0],[200220211010000012],[10002021110222122]]$ \\
$[42,28,7]_3 $ & $2,[1]$ & $[[20110220012],[11102101212001],[22201012101002
]],[[0],[2120111110202],[22022120000212]]$

    \end{tabular}
    \label{tab:my_label}
}

 \end{table}
In  Table 2 and Table 3 below, we have codes with generators of the form $g_1=1$ and $g_2=pg_1=p$. So we simply give $g_2.$ 
\begin{table}[H]
    \centering
    \caption{New QT Codes with $g_1=1$ that are LCD}
    \begin{tabular}{l|l|p{0.70\linewidth}}
$[n,k,d]_q$ & $a,g_2$ & $[[f_{11}],...,[f_{1\ell}]] , [[f_{21}],...,[f_{2\ell}]]$\\
\hline&&\\
$ [12,8,4]_5 $ & $2,[42411]$ & $[[103212],[201242]],[[0],[32]]$ \\
$ [14,8,5]_5 $ & $[1111111]$ & $[[0213422],[104033]],[[0],[1]]$ \\
$ [30,16,9]_5 $ & $2,[40101]$ & $[[142420004],[0234333133],[4302024312]],[[
0],[013143],[432312]]$ \\
$ [6,3,4]_5 $ & $[11]$ & $[[4],[21],[21]],[[0],[1],[4]]$ \\
$ [18,10,6]_5 $ & $2,[211]$ & $[[343214],[320101],[121443]],[[0],[1023],[30
14]]$ \\ 
$ [45,17,17]_5 $ & $[41]$ & $[[411224401],[143434113],[1103241],[432004303
],[202004042]],[[0],[24324023],[041344],[000422],[21112231]]$ \\
$ [6,4,3]_5 $ & $[111]$ & $[[422],[321]],[[0],[4]]$ \\
$ [18,8,8]_5 $ & $2,[42411]$ & $[[322144],[24241],[000442]],[[0],[31],[11]]$\\
$ [33,12,14]_5 $ & $[11111111111]$ & $[[23232410113],[0340212001],[02404034121]],[[0],[4],[4]]$\\
$ [38,20,11]_5 $ & $[1111111111111111111]$ &
$[[43320011101304443],[1300410301141140302]],[[0],[1]]$ \\
$ [ 12, 8, 4 ]_5 $ & $2,[42411]$ & $[[13333],[24401]],[[0],[04]]$ \\
$ [ 21, 8, 10 ]_5 $ & $[1111111]$ & $[[2244002],[2443132],[3033314]],[[0]
,[1],[1]]$\\
$ [ 12, 3, 8 ]_5 $ & $[11]$ & $[[01],[02],[1],[31],[34],[34]],[[0],[4],[4
],[3],[2],[3]]$ \\ 
$ [ 27, 10, 12 ]_5 $ & $[111111111]$ &
$[[34310241],[23310221],[434403341]],[[0],[3],[1]]$ \\
$ [ 30, 8, 16 ]_5 $ & $2,[42411]$ & $[[133114],[432411],[243314],[314243],[3
2331]],[[0],[12],[34],[31],[43]]$ \\
$ [ 18, 4, 12 ]_5 $ & $[111]$ & $[[232],[03],[342],[301],[031],[4]],[[0],
[3],[2],[1],[2],[1]]$ \\
$ [ 14, 3, 10 ]_5 $ & $[11]$ & $[[04],[24],[13],[1],[2],[04],[43]],[[0],[
4],[2],[1],[4],[2],[4]]$ \\
$ [ 28, 8, 15 ]_5 $ & $[1111111]$ & $[[1213104],[2141001],[1221304],[00122
1]],[[0],[2],[3],[4]]$ \\
$ [ 18, 3, 13 ]_5 $ & $[11]$ & $[[4],[42],[24],[04],[12],[3],[12],[31],[01
]],[[0],[4],[1],[4],[1],[4],[2],[2],[2]]$ \\
$ [ 34, 18, 10 ]_5 $ & $[11111111111111111]$ &
$[[0212240132442241],[20133301213221021]],[[0],[1]]$ \\
$ [ 42, 16, 16 ]_5 $ & $2,[4030102040301]$ &
$[[4232404223213],[30433103243322],[34024230010143]],[[0],[02],[31]]$ \\

$ [6,4,3]_7 $ & $[111]$ & $[[332],[456]],[[0],[6]]$ \\
$ [9,4,5]_7 $ & $[111]$ & $[[332],[456],[613]],[[0],[6],[3]]$ \\
$ [6,3,4]_7 $ & $[11]$ & $[[42],[05],[15]],[[0],[6],[2]]$ \\
$ [10,6,4]_7 $ & $[11111]$ & $[[10642],[20425]],[[0],[3]]$ \\
$ [12,6,6]_7 $ & $6,[141]$ & $[[6265],[6536],[6441]],[[0],[3],[32]]$ \\
$ [10,3,7]_7 $ & $[11]$ & $[[04],[13],[06],[04],[23]],[[0],[1],[1],[6],[5]]$ \\
$ [15,6,8]_7 $ & $[11111]$ & $[[11545],[332],[30203]],[[0],[4],[3]]$ \\
$ [16,6,9]_7 $ & $6,[141]$ & $[[1065],[0533],[1526],[0556]],[[0],[51],[41],[14]]$ \\
$ [15,4,10]_7 $ & $[111]$ & $[[233],[033],[434],[22],[246]],[[0],[1],[6],[1],[6]]$ \\
$ [16,3,12]_7 $ & $[11]$ & $[[56],[14],[54],[12],[53],[13],[6],[5]],[[0],[1],[3],[5],[2],[3],[5],[2]]$ \\
$ [18,3,14]_7 $ & $[11]$ & $[[05],[4],[63],[53],[1],[02],[14],[31],[64]],[[0],[1],[2],[5],[6],[6],[3],[5],[3]]$ \\
$ [21,4,15]_7 $ & $[111]$ & $[[033],[354],[303],[654],[631],[05],[265]],[[0],[2],[2],[3],[3],[4],[4]]$ \\
$ [8,3,6]_7 $ & $[11]$ & $[[35],[56],[03],[5]],[[0],[4],[2],[4]]$ \\
$ [26,14,9]_7 $ & $[1111111111111]$ & $[[260255156432],[5063322000532]],[[0],[1]]$ \\
$ [27,10,13]_7 $ & $[111111111]$ & $[[223013033],[543543144],[202056345]],[[0],[2],[6]]$ \\
$ [22,12,8]_7 $ & $[11111111111]$ & $[[4253164303],[10043110262]],[[0],[4]]$ \\
$ [33,12,15]_7 $ & $[11111111111]$ & $[[4253164303],[10043110262],[56200426441]],[[0],[4],[6]]$ \\

    \end{tabular}
    \label{tab:my_label}
\end{table}

\begin{table}[H]
    \centering
    \begin{tabular}{l|l|p{0.70\linewidth}}
$[n,k,d]_q$ & $a,g_2$ & $[[f_{11}],...,[f_{1\ell}]] , [[f_{21}],...,[f_{2\ell}]]$\\
\hline&&\\

$ [24,3,19]_7 $ & $[11]$ & $[[65],[36],[65],[3],[46],[1],[45],[62],[64],[63],[54],[63]],[[0],[5],[2],[2],[3],[5],[5],[2],[2],[2],[1],[3]]$ \\  

$ [30,3,24]_7 $ & $[11]$ & $[[65],[36],[65],[3],[46],[1],[45],[62],[64],[63],[54],[63],[01],[05],[46]],[[0],[5],[2],[2],[3],[5],[5],[2],[2],[2],[1],[3],[1],[6],[4]]$ \\
$ [34,18,11]_7 $ & $[11111111111111111]$ & $[[45313125015566331],[00225343656442513]],[[0],[2]]$ \\
$ [12,3,9]_7 $ & $[11]$ & $[[06],[54],[53],[26],[65],[21]],[[0],[5],[1],[5],[5],[5]]$ \\
$ [12,4,8]_7 $ & $[111]$ & $[[11],[332],[551],[441]],[[0],[1],[5],[1]]$ \\
$ [40,12,20]_7 $ & $6,[106010601]$ & $[[050501626],[1433612426],[2425560244],[2036634133]],[[0],[33],[24],[36]]$ \\
$ [45,10,26]_7 $ & $[111111111]$ & $[[06453443],[054056241],[611123654],[650341214],[103266104]],[[0],[6],[3],[1],[1]]$ \\
$ [38,3,31]_7 $ & $[11]$ & $[[3],[5],[56],[51],[65],[45],[64],[05],[2],[02],[46],[45],[05],[13],[56],[35],[12],[04],[63]],[[0],[6],[3],[5],[2],[3],[3],[3],[6],[5],[5],[5],[3],[5],[4],[3],[1],[6],[2]]$ \\

    \end{tabular}
    \label{tab:my_label}
\end{table}

\begin{table}[H]
    \centering
    \caption{New QT Codes with $g_1=1$ that are Dual-Containing}
    \begin{tabular}{l|l|p{0.70\linewidth}}
    \\
$[n,k,d]_q$ & $a,g_2$ & $[[f_{11}],...,[f_{1\ell}]] , [[f_{21}],...,[f_{2\ell}]]$\\
\hline&&\\
$ [44,34,5]_3 $ & $2,[10202020001]$ & $[[1122210122122002202112],[2021220000
12101200012]],[[0],[112201210001]]$ \\
$ [20,16,3]_3 $ & $2,[11021]$ & $[[1102020111],[0020012212]],[[0],[201122]]
$ \\
$ [112,96,6]_3 $ & $2,[10201000000010101]$ & 
$[[20222110221200110002211011200001001221000211001101122111],[0000010210120110
0101210212022012101222110102211202012002]],[[0],[12121211002010202000202002001
1121001101]]$ \\
$ [136,118,6]_3 $ & $2,[2100120102111121121]$ & 
$[[00010212010010111122100101010121110201012110211202022100110012012201],[2020
2110122021012211012111022101000220100001210120202202022001121212]],[[0],[12221
220002011020102101111010222210120212120021221]]$ \\
$ [14,11,3]_4 $ & $[1],[4321]$ & $[[11aab1b],[0aa0ba]],[[0],[101b]]$ \\ 
$ [8,6,3]_7 $ & $6,[141]$ & $[[0654],[615]],[[0],[33]]$
    \end{tabular}
    \label{tab:my_label}
\end{table}
 \noindent For the $\mathbb{F}_4$ code in Table 3, $a^2+a+1=0$ and $b=a^2$.

\addtolength{\hoffset}{-2.2 cm}

\newpage
\addtolength{\hoffset}{2.2 cm}

In addition to the additional properties that they posses, many of our new QT codes are  better than the BKLCs currently listed in the database \cite{database} for the reason that their constructions are far simpler. A QT code is more desirable than an arbitrary linear code for many reasons. It has a well understood  algebraic structure and its generator matrix is determined solely by its two rows. In comparison, the BKLCs in \cite{database} with the same parameters often have complicated constructions. For example, we found a $[30,16,9]_5$ code that has the same parameters as the comparable BKLC, as well as being LCD. Since it is a QT code, it has a single step construction. The code in the database \cite{database} on the other hand has a 6-step  construction to achieve the same parameters and lacks any additional properties. So, in many cases the QC  codes we have found have better structures with more desirable properties and with the same parameters as BKLCs. 
 
The codes that have additional properties are given in the tables above. The new codes that are not listed in the tables above are enumerated below. To save space, we do not write down their generators, which are readily available from the authors. Moreover, these codes have been reported to the owner of the database \cite{qcdatabase}. Once they are added to the database, their generators could also be available.
\vspace{-0.5cm}

\begin{multicols}{4}
{\renewcommand\labelitemi{}
\begin{itemize}[leftmargin=*]
\item $[90,65,8]_2 $
\item $[45,30,6]_2 $
\item $[102,68,10]_2 $
\item $[75,50,8]_2 $
\item $[78,52,8]_2 $
\item $[48,25,10]_2 $
\item $[111,38,24]_2 $
\item $[92,47,14]_2 $
\item $[84,43,14]_2 $
\item $[54,34,8]_2 $
\item $[45,28,8]_2 $
\item $[90,58,10]_2 $
\item $[30,23,4]_3 $
\item $[26,22,3]_3 $
\item $[33,22,6]_3 $
\item $[45,30,7]_3 $
\item $[12,4,7]_4 $
\item $[26,20,4]_4 $
\item $[18,14,3]_4 $
\item $[16,6,8]_4 $
\item $[21,11,7]_4 $
\item $[10,7,3]_5 $
\item $[16,6,8]_5 $
\item $[10,6,4]_5 $
\item $[8,3,5]_5 $
\item $[36,14,14]_5 $
\item $[45,10,24]_5 $
\item $[15,6,8]_5 $
\item $[45,16,18]_5 $
\item $[39,14,16]_5 $
\item $[9,4,5]_5 $
\item $[55,17,23]_5 $
\item $[33,17,10]_5 $
\item $[38,28,6]_5 $
\item $[44,24,11]_5 $
\item $[20,6,12]_7 $ 
\item $[16,10,5]_7 $ 
\item $[35,8,21]_7 $ 
\item $[14,8,6]_7 $ 
\item $[18,10,7]_7 $

\end{itemize}
}
\end{multicols}

\vspace{-0.5cm}

\section{New Record Breaking Binary Linear Codes}

Using the search method described in section 4.2, we were able to obtain a record-breaking 2-generator binary QT code with a minimum distance higher than the BKLC of the same length and dimension over $GF(2)$ given in \cite{database}.  The generators of this $[111,38,25]_2$-code (in the form of Theorem 3.2)  are\\ 
{\footnotesize{$g_1=1$} \\ 
$g_2=x^{36} + x^{35} + x^{34} + x^{33} + x^{32} + x^{31} + x^{30} + x^{29} + x^{28} + x^{27} + x^{26} + x^{25} + x^{24} + x^{23} + x^{22} + x^{21} + x^{20} + x^{19} + x^{18} + x^{17} + x^{16} + x^{15} + x^{14} + x^{13} + x^{12} + x^{11} + x^{10} + x^9 + x^8 + x^7 + x^6 + x^5 + x^4 + x^3 + x^2 + x + 1$ 
$f_{11} = x^{36} + x^{35} + x^{32} + x^{27} + x^{26} + x^{25} + x^{24} + x^{23} + x^{22} + x^{21} + x^{19} + x^{18} + x^{15} + x^{14} + x^{10} + x^9 + x^8 + x^7 + x^6 + x^5 + x^4 + x^3 + x^2$.\\ 
\noindent $f_{12} = x^{35} + x^{34} + x^{33} + x^{30} + x^{27} + x^{26} + x^{24} + x^{22} + x^{20} + x^{19} + x^{17} + x^{16} + x^{13} + x^{12} + x^{11} + x^{10} + x^9 + x^7 + x^6 + x^4 + x^3 + x^2 + x$.\\  
\noindent$f_{13} = x^{35} + x^{33} + x^{28} + x^{27} + x^{26} + x^{24} + x^{22} + x^{18} + x^{16} + x^9 + x^4 + x^3 + x^2 + x + 1$.\\ 
$f_{21}=0, \qquad f_{22}=1 \qquad f_{23}=1$} \\

Moreover, applying the standard method of extending this  code, we  obtain another record-breaking linear code with parameters $[112,38,26]_2$.



\begin{thebibliography}{}



\bibitem{database} M. Grassl, Code Tables: Bounds on the parameters of of codes, online, \url{http://www.codetables.de/}

\bibitem{ASR}N. Aydin, I. Siap, D. Ray-Chaudhuri, The structure of 1-generator quasi-twisted codes and new linear codes. Design Code Cryptogr.  24, 313-326 (2001)


\bibitem{GenASR}N. Aydin, J. Lambrinos, R. O. VandenBerg, On Equivalence of Cyclic Codes, Generalization of a Quasi-Twisted Search Algorithm, and New Linear Codes.  Design Code Cryptogr. 87, 2199-2212 (2019)

\bibitem{qt1}  N. Aydin and I. Siap, \textit{New quasi-cyclic codes over $\mathbb{F}_5$}, Appl. Math. Lett., 15, 833-836, 2002.


\bibitem{qt2} R. Daskalov and P. Hristov, New quasi-twisted degenerate ternary linear codes, IEEE Transactions on Information Theory, vol. 49, no. 9, pp. 2259-2263, Sept. 2003, doi: 10.1109/TIT.2003.815798.

\bibitem{qt3} R. Ackerman, and N. Aydin, New quinary linear codes from quasi-twisted codes and their duals, Applied Mathematics Letters, Vol 24, No 4, pp 512-515, April 2011.

\bibitem{qt4} R. Daskalov, and P. Hristov, \textit{Some new quasi-twisted ternary linear codes}, JACODESMATH, 2(3), 211-216, 2015.

\bibitem{2gen1} T. A. Gulliver and V. K. Bhargava, \textit{Two new rate 2/p binary quasi-cyclic codes}, IEEE Trans.  Inf. Theory, 40(5), 1667-1668, 1994.

\bibitem{2gen2} E. Z. Chen, \textit{An explicit construction of 2-generator quasi-twisted codes},  IEEE Trans. Inf. Theory, 54(12), 5770-5773, 2008.


\bibitem{qcdatabase} E. Z. Chen, Quasi-Cyclic Codes: Bounds on the parameters of of QC codes, online, \url{http://www.tec.hkr.se/~chen/research/codes/qc.htm}


\bibitem{magma} Magma computer algebra system, online, \url{http://magma.maths.usyd.edu.au/}



\bibitem{cycliceq}  N. Aydin,  R. O.  VandenBerg, A New Algorithm for Equivalence of Cyclic Codes and Its Applications,  arxiv preprint, \url{https://arxiv.org/abs/2107.00159}, 2021. 

\bibitem{CCeq} D. Akre, N.  Aydin,  M. J.   Harrington, S. Pandey,  A Generalization of Cyclic Code Equivalence Algorithm to Constacyclic Codes,  arxiv preprint, \url{https://arxiv.org/abs/2108.08619}, 2021. 


\bibitem{constx} DR. askalov, P. Hristov,  Some new ternary linear codes. JOURNAL OF ALGEBRA COMBINATORICS DISCRETE STRUCTURES AND APPLICATIONS. 4. 227-234 (2017)


\bibitem{Twistulant} N. Aydin, T. H.  Guidotti, P.  Liu,  A. Shaikh, and R. O. VandenBerg, Some generalizations of the ASR search algorithm for quasitwisted codes. Involve, a Journal of Mathematics 13, no. 1 (2020): 137-148.




\bibitem{gf7} N. Aydin, N. Connolly, M.  Grassl, Some results on the structure of constacyclic codes and new linear codes over GF(7) from quasi-twisted codes. Adv. Math. Commun. 11, 245-258 (2017)



\bibitem{NPhard} A. Vardy, The intractability of computing the minimum distance of a code. IEEE Trans. Inform. Theory. 43 1757-1766 (1997)


 







 
























\end{thebibliography}
\end{document}